%% file: usenix.tex
\documentclass[letterpaper,twocolumn,10pt]{article}

\usepackage{usenix}

\usepackage[english]{babel}
\usepackage{booktabs}       
\usepackage{multirow}
\usepackage{enumitem}       
\usepackage{caption}
\usepackage{graphicx}
\usepackage{xspace}
\usepackage{amsmath}
\usepackage{algorithm}
\usepackage{algpseudocode}
\usepackage{listings}
\usepackage{xcolor}
\usepackage{float}
\usepackage{pifont}  
\usepackage{amsthm}
\usepackage{amssymb}

\newif\ifdraft
\drafttrue

\newcommand{\sys}{LiveR\xspace}

\newcommand{\cmark}{\ding{51}}  
\newcommand{\xmark}{\ding{55}}  
\newtheorem{theorem}{Theorem}
\newtheorem{definition}{Definition}[section]

\setlist[itemize]{leftmargin=*, noitemsep, topsep=0pt}
\setlist[enumerate]{leftmargin=*, noitemsep, topsep=0pt}

\newcommand{\para}[1]{\paragraph{#1}}

\begin{document}

\date{}

\title{\Large \bf \sys: Fine-Grained Elasticity via Live Reconfiguration for Model Training}

\author{
{\rm Haoyuan Liu}\\
Shanghai Jiao Tong University
\and
{\rm Kairui Zhou}\\
Shanghai Jiao Tong University
\and
{\rm Shuyao Qi}\\
Shanghai Jiao Tong University
\and
{\rm Qinwei Yang}\\
Shanghai Jiao Tong University
\and
{\rm Shengkai Lin}\\
Shanghai Jiao Tong University
\and
{\rm Shizhen Zhao}\\
Shanghai Jiao Tong University
\and
{\rm Wei Zhang}\\
University of Connecticut
}

\maketitle

\thispagestyle{empty}

\begin{abstract}
\input{abstract}
\end{abstract}

\input{intro}
\input{background}
\input{motivation}
\input{design}

\input{implementation}
\input{evaluation}

\input{conclusion}

{\footnotesize 
\bibliographystyle{plain}
\bibliography{sample}
}

\newpage
\appendix
\input{appendix}

\end{document}

%% file: abstract.tex
To reduce user costs and maximize cluster utilization, large model training increasingly leverages volatile but inexpensive GPU capacity, such as spot instances and reclaimable resources in shared clusters. Yet, capitalizing on these economic benefits requires jobs to adapt within the short warning windows that many such environments provide. Existing elastic training systems still treat reconfiguration as stop-and-restart: they externalize distributed state through checkpoints, rebuild the distributed runtime on a new topology, and restart training, turning each resize event into a storage-heavy recovery procedure that incurs substantial downtime from checkpoint I/O, process restart, CUDA initialization, and communicator setup. We present \sys, a live reconfiguration runtime for elastic LLM training that replaces storage-backed restart with a live, bounded-memory handoff between mixed-parallel training worlds. While the current world continues training, \sys asynchronously prepares the target world, bootstraps newly added workers in isolation to keep heavyweight initialization off the critical path, and streams model state directly over high-bandwidth interconnects while reshaping it online across tensor, pipeline, and data parallel dimensions. Once the target world is ready, \sys performs a lightweight commit that switches training to the new configuration without stop-and-restart on the live path. We implement \sys atop Megatron-LM and PyTorch and evaluate it end-to-end on a multi-node GPU cluster. Across diverse reconfiguration scenarios, \sys reduces downtime from minutes to seconds, accelerates reconfiguration by 14$\times$-23$\times$ over checkpoint/restart baselines, incurs minimal steady-state overhead, and sustains up to 99\% training goodput under volatile-resource conditions, making volatile low-cost GPU capacity far more practical for LLM training.

%% file: intro.tex
\section{Introduction}
\label{sec:intro}

Training large language models (LLMs) at frontier scale is extraordinarily expensive, often requiring millions of dollars and months of continuous GPU execution~\cite{openai2024gpt4technicalreport, deepseekai2025deepseekv3technicalreport, scaling-laws, grattafiori2024llama3herdmodels, cottier2025risingcoststrainingfrontier, workshop2023bloom176bparameteropenaccessmultilingual}. A natural way to reduce this cost is to train on cheaper capacity, such as spot instances~\cite{aws-spot, spotnik} and shared clusters whose idle GPUs can be harvested and later reclaimed, even though such resources are often less reliable (Figure~\ref{fig:spot-preemption}). Cheap capacity is useful only if training can reconfigure quickly enough to follow resource changes; otherwise, the savings are eroded by downtime at each resize. To benefit from volatile but inexpensive capacity, LLM training must therefore be \emph{elastic}: it must scale out when GPUs become available and scale in when they are reclaimed, without paying minutes of lost progress at every transition.

Yet support for elastic LLM training remains limited, even though spot-like resources are useful only when training can adapt to them. In practice, when resources change, training is often adapted through restart-based workflows: the job checkpoints or otherwise terminated directly, relaunches under a new resource configuration, and reconstructs the runtime on the new topology. This approach is compatible with existing frameworks such as Megatron-LM~\cite{megatron-lm} and DeepSpeed~\cite{deepspeed}, and can accommodate a wide range of resource changes, but it incurs substantial downtime from checkpoint I/O, process restart, CUDA initialization, and communicator construction, even with faster checkpointing~\cite{bcp, fast-checkpoint, universal, datastatesllm}. 
A smaller set of online approaches attempts to support elasticity without full restarts (e.g., Bamboo~\cite{bamboo}, Oobleck~\cite{oobleck}), but typically does so by restricting which dimensions of parallelism may change during a resize or by retaining model-state ownership assumptions tied to a fixed layout. As a result, these systems can react quickly yet still preserve imbalanced pipeline partitions, inefficient tensor layouts, or communication patterns that are poorly matched to the new resources. In practice, current approaches often provide either quick resizing or a good post-reconfiguration layout, but not both.

When resource changes are known ahead of time, this tradeoff is not fundamental; it is largely created by how current systems handle runtime setup and distributed state. Restart-based workflows place reconfiguration on the critical path by routing state through storage and reconstructing the distributed runtime from scratch. Online systems~\cite{zhang2024rubickexploitingjobreconfigurability, gandhi2024recycleresilienttraininglarge, he2021pipetransformerautomatedelasticpipelining}, in contrast, often entangle model-state ownership with a fixed parallelism structure, making broad reshaping difficult. This suggests a different abstraction: reconfiguration should be a \emph{handoff}, not a \emph{restart}. The current training world should keep making progress while the target world is prepared in the background, after which model state should be transferred directly over high-bandwidth interconnects and reshaped across TP, PP, and DP dimensions. Realizing such a handoff is non-trivial. Model states, and pipeline-stage ownership must be remapped consistently across mixed-parallel layouts; the new communication topology must be constructed without stalling training; newly added workers must complete heavyweight local initialization off the critical path; and the handoff must preserve bounded memory usage.

We present \sys, a live reconfiguration runtime for elastic LLM training under pre-announced resource changes. The key idea in \sys is a \emph{live, bounded-memory handoff} between mixed-parallel training worlds. Rather than treating reconfiguration as a storage-heavy stop-and-restart procedure, \sys keeps the source world running, prepares the destination world asynchronously, and streams model state directly between them while reshaping it across a broad range of TP/PP/DP configurations. This design removes checkpoint I/O, process restart, and distributed runtime rebuild from the reconfiguration critical path, while decoupling target-world setup from training progress. \sys is designed for reconfiguration events that are announced before they take effect, including spot revocations, scheduler-driven reallocations in shared GPU clusters, and planned resizes. For unexpected fail-stop events, where GPU state becomes unreachable, \sys falls back to conventional checkpoint-based recovery rather than attempting to replace general fault tolerance.

We implement \sys atop Megatron-LM and PyTorch, and evaluate it end-to-end on a multi-node GPU cluster. Across our experiments, \sys reduces reconfiguration downtime from minutes to seconds, achieves a 14$\times$--23$\times$ speedup over checkpoint/restart baselines, incurs minimal steady-state overhead, and sustains up to 99\% goodput under volatile-resource scenarios.

We make the following contributions:
\begin{itemize}
    \item We design \sys, a live reconfiguration architecture for elastic LLM training under pre-announced resource changes that overlaps target-world setup with ongoing training and removes checkpoint I/O and distributed runtime rebuild from the reconfiguration critical path.
    \item We develop a bounded-memory mixed-parallel state handoff mechanism that consistently remaps model states and pipeline-stage ownership across a broad range of TP/PP/DP reconfigurations.
    \item We implement \sys atop Megatron-LM and PyTorch, and show through end-to-end experiments on a multi-node GPU cluster that live reconfiguration substantially reduces downtime while preserving high training efficiency under resource volatility.
\end{itemize}

%% file: background.tex
\section{Background and Related Work}
\label{sec:background}

\subsection{Volatile Resources Make Elasticity Necessary}
\label{sec:bg-volatility}

LLM training runs for weeks on large GPU fleets, making infrastructure efficiency a first-order concern.
In practice, many clusters still exhibit substantial idle capacity due to scheduler fragmentation and over-provisioned static reservations~\cite{tiresias, gandiva, antman, pollux, sia, themis}.
At the same time, operators increasingly rely on discounted spot/preemptible instances~\cite{aws-spot, skypilot}, where node availability is volatile and preemption notices are short ($\sim$2 minutes~\cite{aws-spot}).
These two forces create a clear need for elastic training: jobs should scale out when capacity appears and scale in quickly when resources are reclaimed.

\begin{figure}[t]
    \centering
    \includegraphics[width=\linewidth]{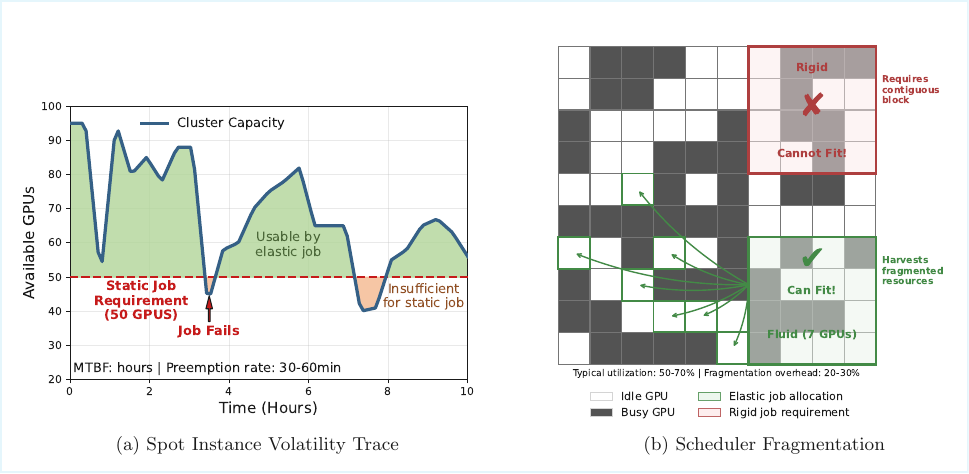}
    \caption{Spot instance preemption example. Node availability fluctuates on timescales of minutes to hours, with preemption warnings (shaded regions) providing only a brief window for state migration. .}
    \label{fig:spot-preemption}
\end{figure}
\subsection{Why Existing Elasticity Is Still Slow}
\label{sec:bg-challenges}

Although elasticity is desirable, practical LLM reconfiguration~\cite{kang2025elaswaveelasticnativescalablehybridparallel, Wagenl_nder_2024} remains expensive for two reasons: restart overhead and mixed-parallel state~\cite{liu2023ringattentionblockwisetransformers, megatron-lm, fang2024uspunifiedsequenceparallelism} complexity.

\subsubsection{Overhead of Stop-and-Restart}
\label{sec:bg-overhead}

Most production flows still use Stop-and-Restart: pause training, relaunch with a new world size, and warm up the runtime again.
This incurs minutes of downtime and often erases the cost advantage of volatile resources.
The delay comes from two components:
\begin{itemize}
    \item \textbf{Storage I/O:} Reloading large distributed states is expensive even with optimized checkpointing speed~\cite{gemini, bcp}.
    \item \textbf{Distributed Re-initialization:} Relaunching processes and rebuilding communicator topology (e.g., NCCL) and warmup is also expensive; initializing and warmup 32 GPUs with 14B model can take nearly 60 seconds even on optimized InfiniBand networks.
\end{itemize}

The key point is that faster checkpointing only addresses the first term; distributed initialization overhead remains on the critical path.

\begin{table}[t]
\centering
\caption{Breakdown of Restart latency for GPT-20B model. With 4 nodes (32 GPUs), DP=2, TP=4, PP=4.}
\label{tab:reconfig-breakdown-motivation}
\small
\begin{tabular}{lrr}
\toprule
\textbf{Phase} & \textbf{Latency (s)} & \textbf{\% of Total} \\
\midrule
Checkpoint Load        & 54.6& 42.9\%\\
Distributed Init + Warmup     & 70.1& 55.1\%\\
Misc (setup, sync, etc.)      & 2.4  & 2.0\%\\
\midrule
\textbf{Total}             & \textbf{127.1}& 100\% \\
\bottomrule
\end{tabular}
\end{table}

\subsubsection{Complexity of Mixed-Parallel Reconfiguration}
\label{sec:bg-parallelism}

Beyond raw downtime, LLM state is sharded across multidimensional parallelism (TP/PP/DP and parameter partitions)~\cite{megatron-lm, deepspeed, gspmd, alpa, galvatron}.
When GPU count changes, the old shard layout typically no longer matches the new topology.
As a result, the system must remap and move large model states while preserving correctness and bounded memory.
This is exactly where many online approaches become restricted to fixed templates (e.g., adjusting only replica count or homogeneous pipeline widths), because full mixed-parallel reshaping (arbitrary changes to TP/PP/DP degrees) is much harder than plain restart.

The above problems—volatile resources demanding elasticity, the prohibitive cost of stop-and-restart, and the complexity of resharding mixed-parallel states—have motivated a growing body of work on elastic training systems. However, existing approaches differ fundamentally in how they handle reconfiguration: some optimize checkpointing to reduce restart latency, others enable limited forms of online elasticity within fixed parallelism templates, and a few pursue fast live migration but sacrifice scalability.

\subsection{Related Work}
\label{sec:bg-prior}

Prior systems can be viewed through two baseline classes used in our evaluation, plus a hybrid category we identify as \textit{Online with Limited Reshaping}.
These categories reflect a fundamental trade-off: systems that restart can reshape parallelism freely (at the cost of minutes of downtime), while systems that avoid restarts can do so only by constraining how parallelism may change.
The root cause is that live reshaping of mixed-parallel state is far more complex than adjusting replica counts---it requires computing precise tensor-shard intersections across different TP/PP/DP decompositions and moving bytes accordingly, all without blocking the active training loop.

\textbf{(A) Restart-based method.}
Systems such as Varuna~\cite{varuna}, TorchElastic~\cite{torchelastic}, and DeepSpeed elasticity~\cite{deepspeed} retain restart-centric workflows.
Checkpointing variants including UCP~\cite{universal}, ByteCheckpoint~\cite{bcp}, and Gemini~\cite{gemini} improve storage behavior but still depend on process re-initialization.
UCP and ByteCheckpoint support \textit{reshaping} (converting between different parallelism strategies during reload), but remain restart-based.
Thus, they reduce only part of the downtime in Table~\ref{tab:reconfig-breakdown-motivation}, which was measured on our 4-node A800 testbed using a GPT-14B model with DP=2, TP=4, PP=2.

\textbf{(B) Online fixed-template method.}
Online systems such as Bamboo~\cite{bamboo}, Oobleck~\cite{oobleck}, Parcae~\cite{parcae}, and TrainMover~\cite{trainmover} avoid full restart in certain scenarios, but constrain how parallelism can change at runtime.
Bamboo uses redundant computation to tolerate failures without restart, but adjusts only pipeline depth/width within fixed templates.
Oobleck employs pipeline templates for dynamic reconfiguration without checkpointing, yet restricts reshaping to homogeneous or heterogeneous pipeline combinations that fit predefined templates.
Parcae and TrainMover provide live migration with minimal overhead, but do not support changing TP/PP/DP degrees arbitrarily.
These systems achieve Init-Free and Storage-Free operation, but sacrifice flexibility in parallelism reshaping.

\textbf{(C) System-level and scheduling work.}
System-level migration (e.g., Singularity~\cite{singularity}, CRIU-based GPU checkpointing~\cite{criugpu}) and cluster scheduling (e.g., Tiresias~\cite{tiresias}, Gandiva~\cite{gandiva}, Pollux~\cite{pollux}) are complementary.
They improve infrastructure operations but do not directly solve live mixed-parallel state reshaping in framework-level distributed training.

\textbf{(D) Automated parallelism search.}
Systems such as Alpa~\cite{alpa} and Megatron-LM's automatic TP/PP search~\cite{megatron-lm} address a complementary problem: given a static GPU topology, determine the optimal parallelism decomposition. These systems perform a one-time search at job launch and assume the topology remains fixed throughout the run. When the topology changes, they must re-run the search and restart the job. \sys solves the \emph{execution} problem---how to transition from one parallelism configuration to another without stopping---and is orthogonal to the \emph{search} problem of which configuration to choose. A natural integration would have the search system determine the target $(TP', PP', DP')$ and \sys execute the live transition.

\begin{table}[h]
\caption{Comparison of elastic training systems. \textbf{Reshaping}: supports arbitrary reconfiguration across TP/PP/DP strategies (not just fixed templates). \textbf{Storage-Free}: no persistent storage I/O on the critical path (in-memory OK). \textbf{Init-Free}: no process re-initialization or NCCL rebuild required.}
\label{tab:comparison}
\resizebox{\columnwidth}{!}{%
\begin{tabular}{@{}lcccc@{}}
\toprule
\textbf{Works} & \textbf{Mechanism} & \textbf{Reshaping} & \textbf{Storage-Free} & \textbf{Init-Free} \\ \midrule
Varuna~\cite{varuna} & Cold Restart & \xmark & \xmark & \xmark \\
UCP~\cite{universal} & Cold Restart & \cmark & \xmark & \xmark \\
ByteCheckpoint~\cite{bcp} & Cold Restart & \cmark & \xmark & \xmark \\
Bamboo~\cite{bamboo} & Online (Template) & \xmark$^*$ & \cmark & \cmark \\
Oobleck~\cite{oobleck} & Online (Template) & \xmark$^*$ & \cmark & \cmark \\
Parcae~\cite{parcae} & Online (Migration) & \xmark & \cmark & \cmark \\
TrainMover~\cite{trainmover} & Online (Migration) & \xmark & \cmark & \cmark \\
\textbf{\sys} & \textbf{Online (Live)} & \textbf{\cmark\,(\textit{Any TP/PP/DP})} & \textbf{\cmark} & \textbf{\cmark} \\ \bottomrule
\multicolumn{5}{l}{ $^*$ Supports limited reshaping within fixed templates.}
\end{tabular}%
}
\end{table}

In summary, existing work typically optimizes either restart efficiency or online continuity, but not both together with \textit{arbitrary} mixed-parallel reshaping.
\sys combines all three properties required by highly volatile LLM training: storage-free reconfiguration, init-free live transition, and topology-flexible reshaping across arbitrary parallelism strategies.

%% file: motivation.tex
\section{Motivation}
\label{sec:motivation}
For cost-effective use of volatile resources, existing approaches force a trade-off between reconfiguration speed and parallelism flexibility. To break this trade-off, a live reconfiguration system should satisfy three competing demands simultaneously: it should avoid stopping the active training world, it should not require a full model replica in memory, and it should reshape state across arbitrary TP/PP/DP layouts. No existing system satisfies all three.

We now articulate the three fundamental challenges that make this combination difficult, which motivate the design of \sys.

\textbf{(1) Challenge: Constructing new topologies without stopping. } 
Distributed training frameworks conventionally bind the training loop to a single global process group. Reconfiguring the topology (adding/removing GPUs, changing TP/PP/DP) requires tearing down the old group and synchronously building a new one—a Stop-and-Restart event. To avoid paying this cost on the critical path, the system must decouple the preparation of the new topology from the execution of the old one. This requires maintaining two logical generations of the world simultaneously, with a clean handoff boundary between them. The handoff must be atomic and must not require materializing a second full copy of the model.

\textbf{(2) Challenge: Hiding distributed initialization latency. } 
When scaling out, newly added GPUs are \emph{cold}: they must initialize CUDA contexts, compile JIT kernels, and autotune communication primitives etc. In a naive design, the training stops, these cold nodes join the global process group immediately, and any collective operation blocks until all ranks reach the same synchronization point. The challenge is to let new workers complete heavyweight local setup without participating in global collectives, while avoiding stop the training process and to seamlessly integrate them into the training loop only after they are fully warmed up.

\textbf{(3) Challenge: Decoupling state from parallelism strategy. } 
When the parallelism configuration changes from $(TP, PP, DP)$ to $(TP', PP', DP')$, the mapping of tensor shards to ranks is invalidated. For example, a weight matrix that was split across 4 TP ranks must now be split across 8. The system must remap and move large model/optimizer states while preserving correctness and bounded memory. This is exactly where many online approaches restrict themselves to fixed templates (e.g., adjusting only DP replica count), because computing precise tensor-shard intersections across arbitrary TP/PP/DP decompositions is more complex than plain restart. The challenge is to perform this transfer efficiently, correctly, and without materializing a full model replica.

We motivate \sys through three empirical observations that expose the limitations of existing approaches and establish the need for a live reconfiguration runtime with arbitrary mixed-parallel reshaping, we now present the design of \sys, which addresses each through a carefully co-designed set of mechanisms: dual-world execution with shadow initialization (Section~\ref{sec:design-shadow}), mock process groups for isolated warmup (Section~\ref{sec:design-isolated}), and a layer-streaming resharding protocol with bounded memory (Section~\ref{sec:design-reshard}).

%% file: design.tex
\section{Design of \sys}
\label{sec:design}

We present the design of \sys, a live reconfiguration runtime that treats elastic scaling as a bounded network operation rather than a restart-driven recovery workflow.
The central insight is that \textbf{the two dominant costs of Stop-and-Restart---storage I/O and distributed re-initialization---can be reduced and be hidden if the reconfiguration path never tears down the running world}.
 And before detailing each mechanism, we first define the system model and walk through the end-to-end workflow.

\subsection{System Model and Scope}
\label{sec:design-event-model}

\sys targets \emph{warning-based or planned elasticity} events: scheduled maintenance windows, scheduler-driven scale changes, and preemptions with advance notice (e.g., spot instance eviction warnings, which typically provide a 2-minute grace period~\cite{aws-spot}).
Under this model, all participating GPUs remain reachable during the bounded transfer window (seconds, as we will show in Section~\ref{sec:evaluation}).

\paragraph{Elasticity event spectrum.}
Elasticity events in practice span a continuous spectrum rather than a binary division. At one end are \emph{planned resizes} (scheduler-driven, arbitrarily long warning window), where \sys has ample time to complete the full handoff. In the middle are \emph{preemption warnings} (e.g., AWS Spot's 2-minute notice~\cite{aws-spot}), where the warning window comfortably exceeds the preparation time (Section~\ref{sec:eval-breakdown}). At the far end are \emph{unannounced fail-stop} events (sudden power loss, kernel panic), where GPU state becomes unreachable and no live mechanism can help. Public trace data from major cloud providers indicates that the vast majority of Spot interruptions provide advance notice~\cite{aws-spot, skypilot}; unannounced failures are comparatively rare and are already well-served by existing checkpoint-recovery mechanisms.


\begin{figure*}[t]
    \centering
    \begin{minipage}[t]{0.48\linewidth}
        \centering
        \includegraphics[width=\linewidth]{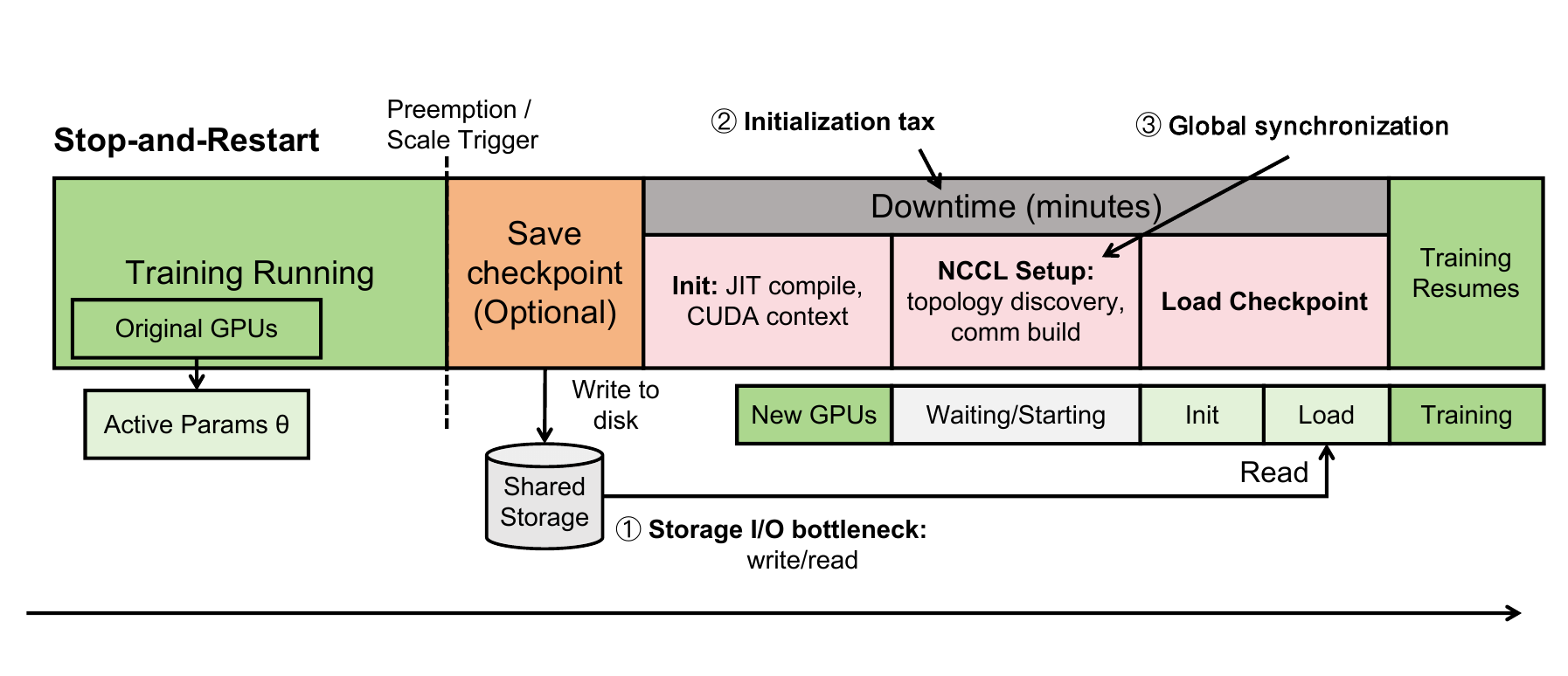} \\[3pt]
        \textbf{(a)} Stop-and-Restart
    \end{minipage}
    \hfill
    \begin{minipage}[t]{0.48\linewidth}
        \centering
        \includegraphics[width=\linewidth]{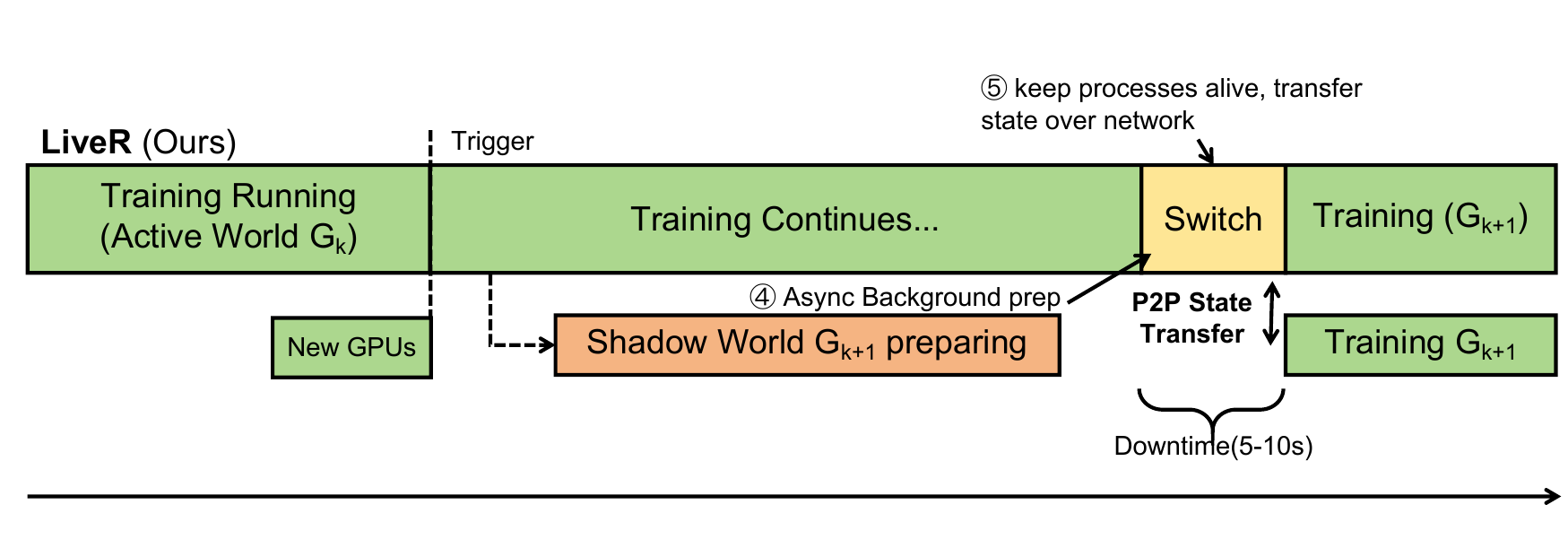} \\[3pt]
        \textbf{(b)} \sys
    \end{minipage}
    \caption{Reconfiguration sequence comparison. Stop-and-Restart (left) incurs minutes of downtime due to storage I/O and distributed re-initialization. \sys (right) reduced both bottlenecks by keeping processes alive and streaming state directly over high-bandwidth interconnects.}
    \label{fig:reconfig-compare}
\end{figure*}

\paragraph{Graceful degradation at the boundary.}
Between these extremes lies a narrow \emph{gray zone}: events with very short or no notice where a full handoff cannot complete, but GPU state is still momentarily reachable. \sys degrades gracefully in this regime. Because the Active World's parameter storage remains intact and consistent until the Atomic Switch commits (invariant I4), an interrupted handoff can fall back to the latest durable checkpoint with no loss beyond the normal checkpoint interval. Moreover, any partial progress from the Shadow World (e.g., pre-built communicators, completed mock warmup) can accelerate the subsequent checkpoint-based recovery, since the new topology is already partially constructed. This design makes \sys's live path a \emph{strict improvement} over checkpoint-only recovery: when the warning window suffices, it eliminates I/O and initialization entirely; when it does not, the system degrades to the same checkpoint baseline, potentially faster due to residual Shadow World work.

As we discuss in Section~\ref{sec:bg-volatility}, warning-based events constitute the majority of elasticity triggers in production spot-instance settings; unexpected fail-stop losses are comparatively rare and are already well-served by existing checkpoint-recovery mechanisms.

\subsection{Design Invariants}
\label{sec:design-invariants}

All mechanisms in \sys are governed by four invariants that together define the correctness and resource boundaries of the system:

\textbf{I1---No global restart on the live path.} For warning-based or planned elasticity events, existing ranks continue training while the new groups are prepared in the background.
\textbf{I2---Bounded memory during transition.} \sys does not keep a second full copy of model states; the additional memory overhead is limited to metadata plus a fixed-size staging buffer.
\textbf{I3---Deterministic switch boundary.} Reconfiguration is committed only at a consistent iteration boundary after in-flight work has drained.
\textbf{I4---Explicit failure boundary.} Unexpected fail-stop loss falls outside the live path and triggers checkpoint-based recovery.


\subsection{End-to-End Workflow}
\label{sec:design-e2e}

\begin{figure*}[t]
    \centering
    \includegraphics[width=\textwidth]{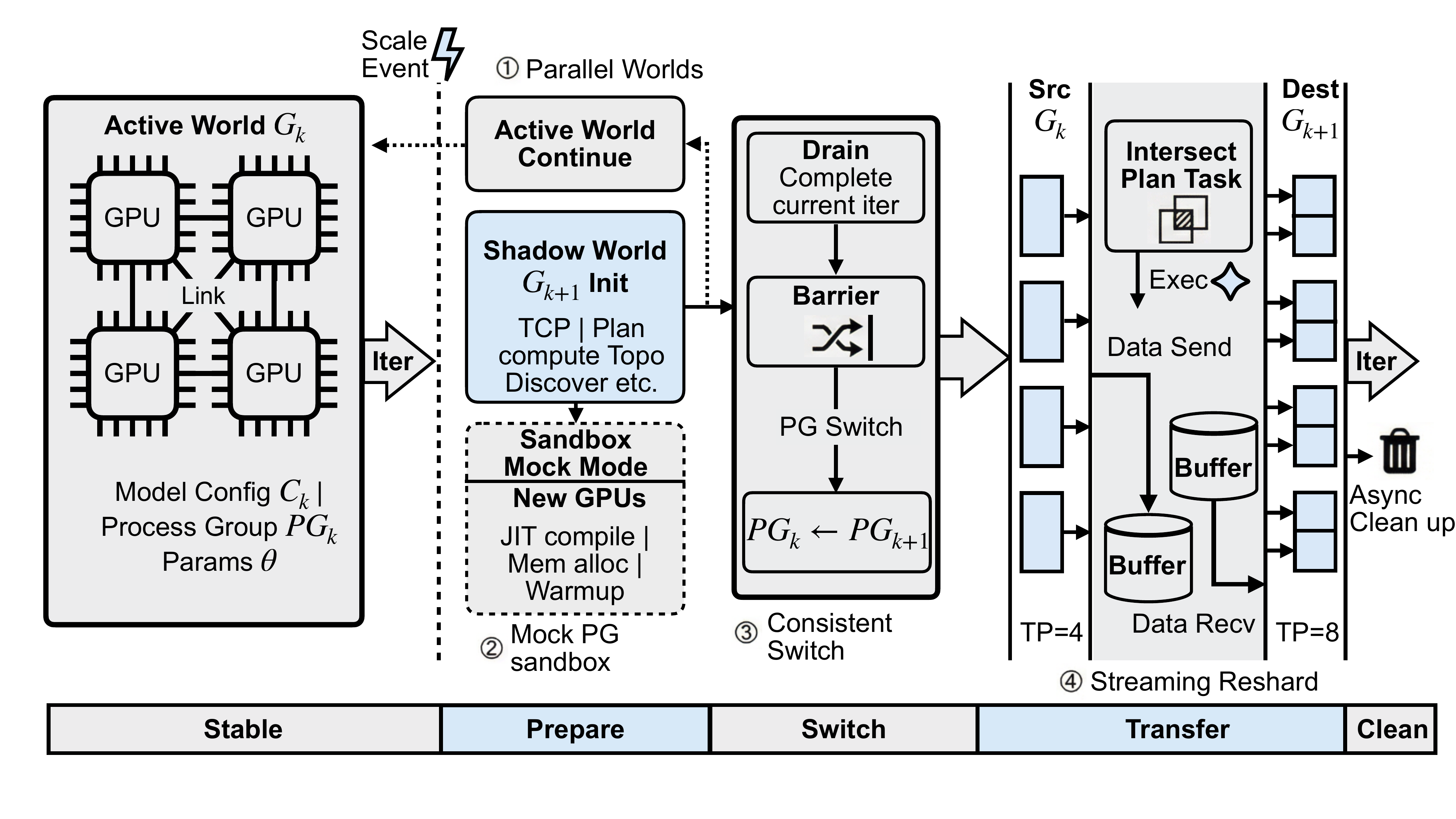}
    \caption{\sys architecture overview. The \emph{foreground plane} executes the training loop using the Active World's process groups; the \emph{background plane} handles Shadow World initialization and Streaming Resharding state transfer. The Atomic Switch is a sub-second metadata swap; no second model copy is ever instantiated.}
    \label{fig:arch}
\end{figure*}

Figure~\ref{fig:reconfig-compare} shows \sys timeline compared to stop-and-restart method.
Figure~\ref{fig:arch} illustrates the \sys architecture. 
The system employs a dual-plane design: the \emph{foreground plane} executes the training loop using the Active World's process groups, while the \emph{background plane} handles shadow initialization and state migration.
Reconfiguration proceeds without blocking the critical training path. 

The full lifecycle of a reconfiguration event proceeds in eight steps. First, the controller receives a \emph{trigger}---a resize event such as a spot eviction warning or a scheduler scale-out command---and determines the target topology $(TP', PP', DP')$. While the Active World ($G_k$) continues training, the background plane constructs the \emph{Shadow World} ($G_{k+1}$): new processes are spawned, TCP bootstrap and NCCL topology discovery run on separate threads, and communicators for the target configuration are established (Section~\ref{sec:design-shadow}). Second, new ranks complete heavyweight local setup---model construction, CUDA context, JIT compilation, kernel autotuning---inside a sandboxed environment that intercepts all collective calls, preventing cold nodes from blocking hot ones (Section~\ref{sec:design-isolated}). The controller computes the geometric intersection of tensor shards between the old and new configurations, generating a minimal peer-to-peer \emph{transfer plan} (Section~\ref{sec:design-reshard}). Third, the Active World finishes iteration $N$ and drains in-flight workloads, yielding a clean boundary with no residual dependencies, process-group references and routing metadata are atomically redirected from $G_k$ to $G_{k+1}$, this is a sub-second metadata swap. Fourth, parameter shards move layer-by-layer through a fixed staging buffer over high-bandwidth interconnects, directly from source ranks to destination ranks without touching storage (Section~\ref{sec:design-reshard}), and no second copy of the model is instantiated (Section~\ref{sec:design-switch}). Finally, old-world communicators and staging buffers are reclaimed asynchronously while iteration $N{+}1$ starts immediately on the new world.

Steps 1--2 overlap entirely with ongoing training (invariant I1).
Step 3 constitute the only window where training pauses; as we show in Section~\ref{sec:evaluation}, this window lasts seconds, not minutes.

\subsection{Parallel Worlds: Constructing New Topologies Without Stopping}
\label{sec:design-shadow}

\textbf{Challenge.}
Distributed training frameworks bind the training loop to a single global set of process groups.
Reconfiguring the topology---adding or removing nodes, changing the TP/PP/DP decomposition---conventionally requires tearing down the old groups and synchronously building new ones.
This teardown-rebuild cycle forces a full Stop-and-Restart, violating invariant I1.

\textbf{Approach.}
\sys decouples execution from preparation by maintaining two logical generations of the world simultaneously. The \emph{Active World} ($G_k$) is the currently executing configuration: all training kernels and NCCL collectives use $G_k$'s communicators. The \emph{Shadow World} ($G_{k+1}$) is the target configuration, constructed entirely in the background.

While $G_k$ continues to compute, $G_{k+1}$ performs TCP bootstrap, NCCL topology discovery, and communicator setup on separate threads and new devices.
These operations are inherently asynchronous---they involve network handshakes and control-plane coordination but no GPU computation---so they can run concurrently with the training loop without contending for GPU computation resources.

The key property is that all expensive control-plane setup is shifted off the critical training path.
When the Shadow World is fully initialized, the system transitions to the \emph{Ready} state and waits for the next iteration boundary to proceed with transfer and switch.

\subsection{Mock Process Groups: Hiding Initialization Latency}
\label{sec:design-isolated}

\textbf{Challenge.}
When scaling out, new nodes are \emph{cold}: they must boot processes, initialize CUDA contexts, compile JIT kernels, and autotune communication primitives.
In a naive design, these new ranks join the global process group immediately, and any collective operation they participate in blocks until \emph{all} ranks---old and new---reach the same synchronization point.
A single joiner can stall the entire process,
this is precisely the distributed re-initialization overhead identified in Section~\ref{sec:bg-overhead}.

\textbf{Approach.}
\sys interposes a \emph{Mock Process Group} layer between the new ranks and the real NCCL communicators.
In mock mode, all collective calls (e.g., \texttt{AllReduce}, \texttt{Broadcast}, \texttt{Barrier}) are intercepted and return locally without performing actual network communication.
This allows new ranks to complete model construction and parameter allocation, trigger JIT compilation of forward/backward kernels, and run kernel autotuning and memory planning---all of which are \emph{local} operations that require no global coordination.
Once the mock warmup is complete, the rank discards its mock state and joins the real Shadow World communicators.

The critical property is a \emph{symmetry break}: local initialization and global coordination are decoupled, so active ranks never wait for cold-start latencies.
This directly addresses the distributed initialization overhead that checkpointing-based approaches cannot eliminate.

\subsubsection{Companion Manager and Generation State Machine}
\label{sec:design-shadow-companion}

To coordinate the dual-world lifecycle without race conditions, each rank is paired with a lightweight \emph{Companion Manager}---a daemon thread that handles control-plane tasks: peer discovery, background handshakes, communicator lifecycle management, and state transitions. They run concurrently with GPU-bound training kernels and introduce negligible interference (quantified in Section~\ref{sec:eval-breakdown}).

Each configuration is tagged with a monotonic \emph{generation id}.
The state machine transitions through five states. In the \emph{Stable} state, only the Active World exists and training proceeds normally. Upon a resize trigger, the machine enters the \emph{Prepare} state, which initiates Shadow World bootstrap and mock warmup. When the Shadow World is fully initialized, the system transitions to the \emph{Ready} state and awaits the next iteration boundary. At the consistent cut, the machine enters the \emph{Switch} state, original training progress stalls as process group references are atomically redirected and parameters are transferred to the new GPUs. Finally, the \emph{Cleanup} state reclaims old-generation resources asynchronously.

Figure~\ref{fig:state_machine} summarizes these transitions.
The generation id ensures that stale references to old communicators are never used after a switch, and the monotonic ordering guarantees that at most two generations coexist at any time (invariant I2).

\begin{figure}[h]
    \centering
    \includegraphics[width=\linewidth]{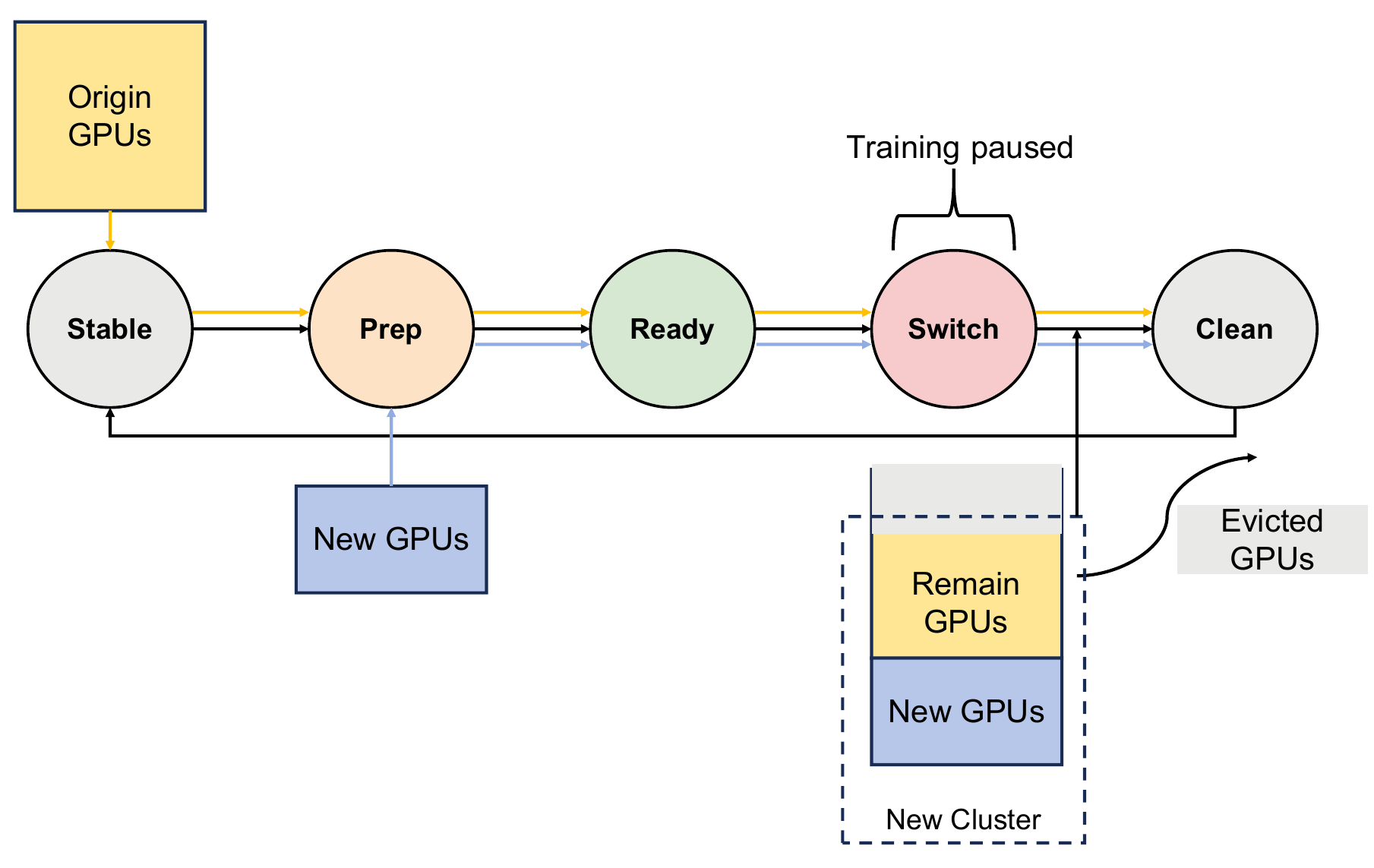}
    \caption{Generation state machine for safe world transitions. Each configuration is identified by a monotonic generation id; at most two generations coexist.}
    \label{fig:state_machine}
\end{figure}

\subsection{Streaming Resharding: Arbitrary State Transfer with Bounded Memory}
\label{sec:design-reshard}

\textbf{Challenge.}
When the parallelism configuration changes from $(TP, PP, DP)$ to $(TP', PP', DP')$, the mapping of tensor shards to ranks is invalidated. For example, a weight matrix that was split across 4 TP ranks must now be split across 8.
This state transfer is the core difficulty of \emph{arbitrary} reshaping---it is exactly why existing online systems restrict themselves to fixed templates (e.g., adjusting only DP replica counts).

Two constraints make this problem hard:
(1) the transfer must be \emph{correct}---every element of the global state must appear exactly once in the new configuration; and
(2) the transfer must be \emph{memory-bounded}---no single GPU can materialize a full model replica during the transition (invariant I2).

\textbf{Approach.}
\sys solves this with a two-phase design: \emph{metadata-driven transfer planning} followed by \emph{bounded layer-streaming execution}.

\subsubsection{Abstract Resource View and Transfer Planning}

Standard frameworks hard-code the mapping from tensor shards to rank indices (e.g., rank $k$ owns the $k$-th column slice).
\sys replaces this with an \emph{Abstract Resource View}: the training state is modeled as logical tensors (e.g., weights of layer $L$) plus a sharding specification, independent of physical rank assignment.

Given the old configuration $C_{k}$ and the new configuration $C_{k+1}$, the planner computes, for each tensor, the \emph{geometric intersection} of source and destination shard views.
This intersection directly yields a set of \texttt{TransferTask} objects specifying the exact byte ranges to be sent from each source rank to each destination rank.

For example, in a transition from $TP=4$ to $TP=8$, each source rank owns $W[:, 0{:}N/4]$ and each destination rank needs $W[:, 0{:}N/8]$.
The intersection computation identifies that each source rank sends the first half of its columns to one destination rank and the second half to another---no redundant data movement, no full-tensor materialization.
Figure~\ref{fig:resharding} visualizes this geometry.

\begin{figure}[t]
    \centering
    \includegraphics[ width=1.0\linewidth]{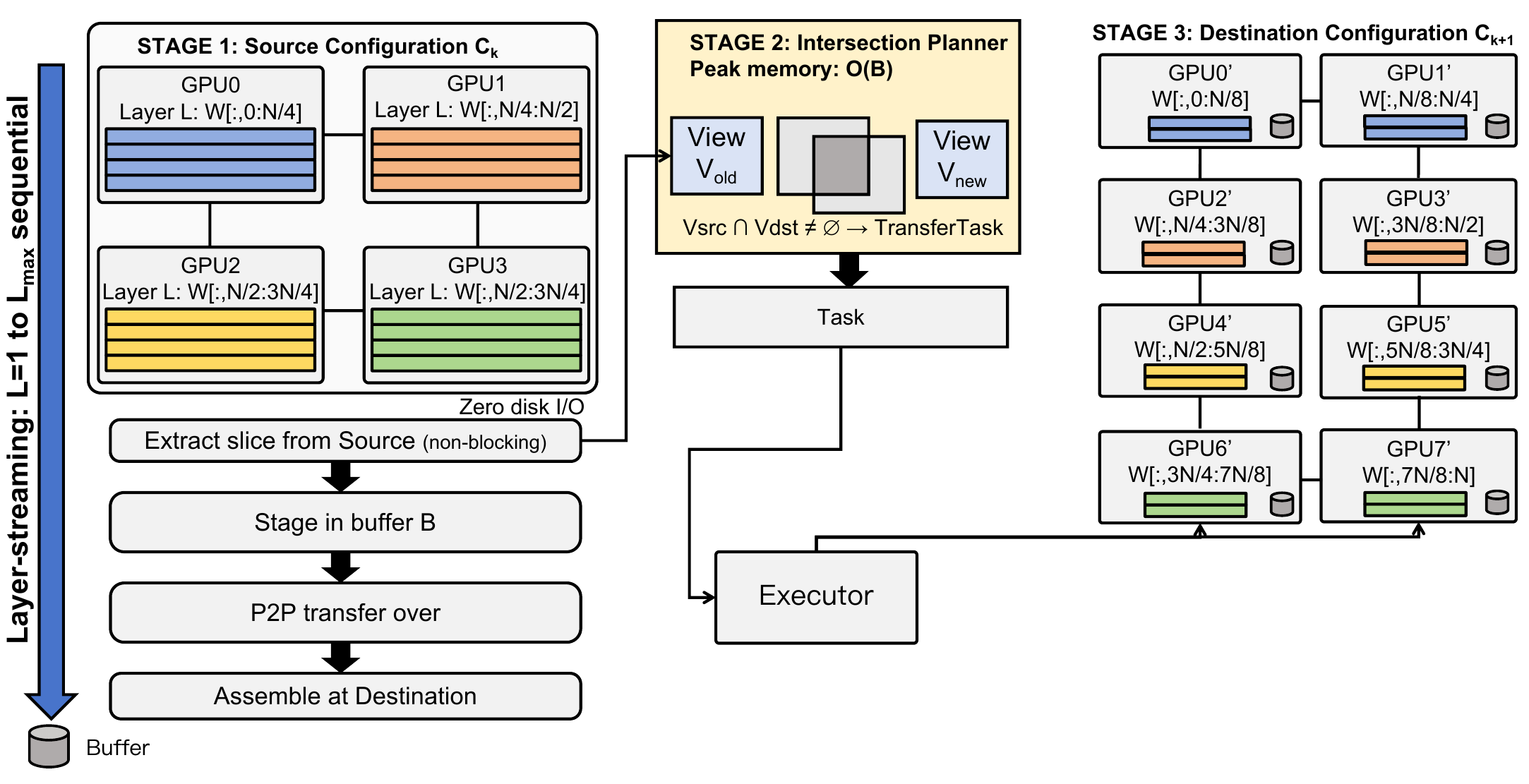}
    \caption{Intersection-based transfer planning for TP resharding ($TP{=}4 \to TP{=}8$). Each source rank sends half its shard to two destination ranks. The intersection defines the exact byte ranges---no full-tensor materialization is needed.}
    \label{fig:resharding}
\end{figure}

This approach generalizes across all parallelism dimensions. For TP transitions, column or row slices are split or merged along the sharded dimension. For DP transitions, increasing replicas degenerates to a broadcast pattern while decreasing replicas discards redundant copies. For PP transitions, entire layers move between stages; the intersection is either the full tensor or empty.

The planning phase runs entirely on CPU and completes in under 1 second even for a 175B-parameter model with 96 layers and 1024 ranks, as the computation involves only sharding metadata, not actual tensor data.

\subsubsection{Layer-Streaming Protocol}

Executing all transfer tasks simultaneously would materialize the entire model state in the staging buffers, violating invariant I2.
\sys therefore executes transfers \emph{one layer at a time} through a fixed-size staging buffer $B$ (typically 512\,MB--1\,GB). For each layer $\ell$, source ranks send the relevant slices to destination ranks via peer-to-peer NCCL connections; destination ranks receive incoming slices into the staging buffer, assemble the new shard, and write it to the model's parameter storage. The staging buffer is reused across layers, capping the additional memory at $O(B)$ regardless of model size.

This protocol ensures that the transition overhead never scales with total model size---only with the per-layer state size and the staging buffer budget.
For full formalization of the configuration space, correctness conditions, and protocol-level pseudo-code, see Appendix~\ref{sec:appendix-transfer}.

\subsection{Switch and Consistent Cut}
\label{sec:design-switch}

The exact sequence at the reconfiguration boundary proceeds in three phases. For a scale out event, first, active ranks \emph{drain}: they finish iteration $N$, and under 1F1B and interleaved pipeline schedules~\cite{pipedream, megatron-lm}, the end of an iteration is a natural consistent cut where GPU in the current step have completed forward and backward passes with no in-flight dependencies. Second, the Streaming Resharding protocol executes bounded layer-by-layer \emph{transfer} of persistent state from the old configuration to the new one; training is paused during this window. Third, process-group references and routing metadata are atomically \emph{switched}: all ranks redirect their communication endpoints from $G_k$ to $G_{k+1}$, a sub-second operation involving only pointer swaps and metadata updates with no data movement.

After the switch, iteration $N{+}1$ starts directly on the new world with pre-warmed communicators and pre-compiled kernels.
No process restart, no NCCL rebuild, no CUDA re-initialization is required.

\subsection{Resource and Overhead Boundaries}
\label{sec:design-overhead}

\paragraph{Memory overhead.}
The additional memory during reconfiguration consists of NCCL communicator metadata for the Shadow World (a few MB per group) and the staging buffer $B$ (typically 512\,MB--1\,GB, configurable based on available memory).
The total overhead is bounded by $O(B + C)$, where $C$ is the communicator metadata.
For a 14B model on 80\,GB A800 GPUs, this amounts to less than 2\,GB---well within the headroom of typical training configurations.
Detailed accounting appears in Appendix~\ref{sec:appendix-mem}.

\paragraph{Failure fallback.}
If a fail-stop loss occurs before the transfer completes, the system has not yet committed the switch---the old configuration's parameter storage remains intact and consistent at iteration $N$.
\sys falls back to the latest durable checkpoint (which may be iteration $N$ or earlier, depending on the checkpoint interval) and resumes via Stop-and-Restart.
This satisfies invariant I4 and ensures correctness without requiring in-memory replication.

%% file: implementation.tex
\section{Implementation}
\label{sec:implementation}

We implement \sys strictly as a runtime extension to Megatron-LM and PyTorch Distributed. The implementation comprises approximately 10,000 lines of Python code and introduces minimal intrusiveness: the core training loop in Megatron remains largely unmodified, save for hooks that invoke the central controller and runtime manager at iteration boundaries.

\paragraph{Shadow Group and Companion Manager.} 
The \sys Companion Manager is implemented using Python. It runs a daemon thread alongside the main GPU worker. For the Active World, it delegates to the standard implementation; for the Shadow World, it diverts initialization to a separate CUDA stream and CPU thread, thus ensuring that TCP handshakes, socket establishment, and subsequent transport connection functions do not stall the CPU thread dispatching kernels to the active GPU stream.

\paragraph{Streaming Resharding.} 
The intersection-based resharding planner traverses the computational graph dynamically using PyTorch's tensor metadata. To execute the data transfers without exceeding memory, \sys partitions model state into fixed-size chunks (default $B=512\,\text{MB}$). It leverages PyTorch's \texttt{TCPStore} to exchange port configurations, then establishes direct peer-to-peer connections between old and new ranks. The transfer is pipelined using \texttt{isend()} and \texttt{irecv()}.  

\paragraph{Simulator Calibration.} 
To evaluate \sys at the scale of thousands of GPUs (e.g., 70B parameter models on 1,024 GPUs) and conduct steady-state analyses, we built a discrete-event simulator using the SimPy framework. Crucially, the simulator is rigorously calibrated against physical profiling traces from our testbed. Network transfer times are modeled based on profiled NCCL \texttt{all\_to\_all} and point-to-point bandwidth under contention; computation steps are derived from Megatron-LM iteration latencies; and distributed initialization overheads (TCP/NCCL setup) are measured empirically across different cluster sizes using PyTorch By strictly separating measured results from extrapolated metrics, we ensure that simulator-based claims for large-scale deployments are solidly anchored in real-world performance data.

%% file: evaluation.tex
\section{Evaluation}
\label{sec:evaluation}
We evaluate \sys on a physical testbed along four dimensions: end-to-end reconfiguration speedup over checkpoint-based baselines (\S\ref{sec:eval-speedup}), detailed breakdown of the live reconfiguration pipeline and its steady-state overhead (\S\ref{sec:eval-breakdown}),
and end-to-end training efficiency under frequent resource volatility (\S\ref{sec:eval-goodput}). Numerical correctness is verified in \S\ref{sec:eval-correctness}. Large-scale simulation results (up to 1,024 GPUs, 70B parameters) appear in \S\ref{sec:eval-simulation}.

\subsection{Experimental Setup}
\label{sec:eval-setup}

\paragraph{Testbed.}
All experiments run on a cluster of four NVIDIA A800 (80\,GB) PCIe servers (NF5468M6), providing 32 GPUs in total. Each node is equipped with dual-socket Intel Xeon 8358 CPUs (2.6\,GHz, 32 cores each), 1\,TB of DDR4 DRAM, and two 3.84\,TB NVMe SSDs for local storage. Intra-node GPU communication uses the PCIe Gen4 bus, while inter-node traffic traverses a non-blocking 200\,Gbps HDR InfiniBand fabric (one dual-port ConnectX-6 HCA per node). Each node also has a dual-port 25\,Gbps Ethernet NIC for management traffic. \sys is implemented on top of Megatron-LM and PyTorch, with NCCL handling collective communication. 

\paragraph{Baselines.}
We compare \sys against two baselines that cover the design space of elastic training, categorized by whether they support \emph{arbitrary parallelism reshaping}, avoid \emph{process restart}, and reduce \emph{persistent storage I/O}.

\begin{itemize}
    \item \emph{Megatron-LM Checkpoint}~\cite{megatron-lm} represents standard production practice: training pauses, persists the full distributed state to shared storage(optional, here we choose to fallback to previous checkpoint, so there is no saving time), reloads, and performs a full process restart on the new topology.

    \item \emph{UCP}~\cite{universal} and \emph{ByteCheckpoint}~\cite{bcp} represent the state of the art in optimized checkpointing. Both support \emph{parallelism reshaping during reload} (e.g., load-time resharding), which narrows the reload latency gap, yet both still require a full process restart and depend on persistent storage to reconstruct the distributed state. We model them by reducing reload latency while retaining the restart and re-initialization overhead, We use UCP as the representative checkpoint-reshape baseline.
\end{itemize}



\begin{figure}[t]
    \centering
    \includegraphics[width=\linewidth]{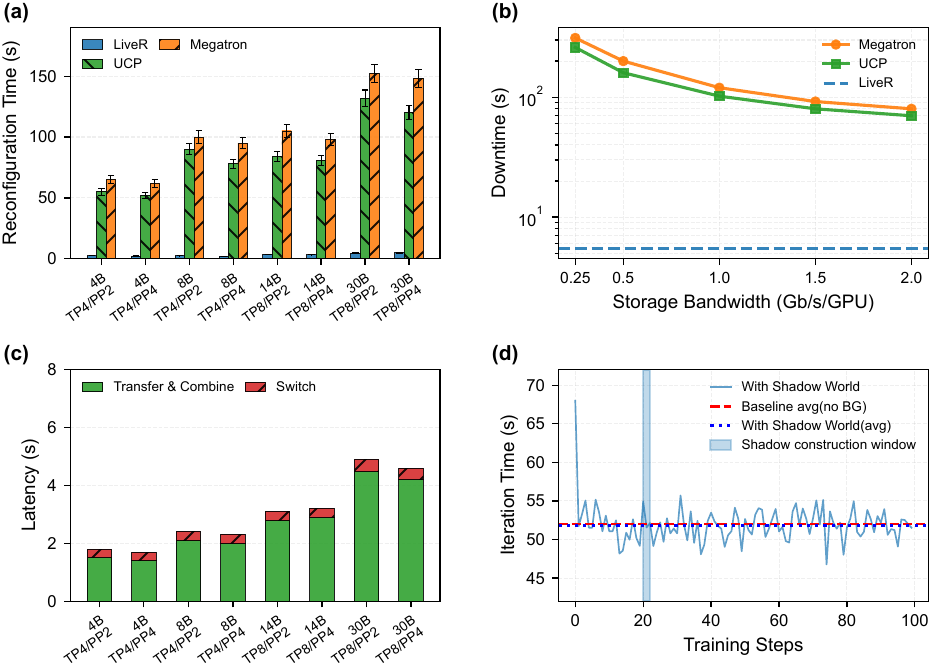}
    \caption{End-to-end evaluation on the 32-GPU A800 testbed. (a)~Reconfiguration downtime across model sizes: \sys achieves 14$\times$--23$\times$ speedup over Megatron-LM Checkpoint and consistently stays below 6\,s. (b)~Storage bandwidth sensitivity for GPT-14B: checkpoint-based systems degrade sharply at low bandwidth, while \sys is storage-independent. (c)~Latency breakdown of a live reconfiguration event:  Transfer And Combine grows with model size and remains a majority at 32-GPU scale; Switch is a sub-second metadata swap. (d)~Steady-state iteration time during concurrent Shadow World initialization: no latency spikes or throughput drops are observed across 100 steps.}
    \label{fig:eval_main}
\end{figure}

\subsection{Reconfiguration Speedup}
\label{sec:eval-speedup}

Figure~\ref{fig:eval_main}(a) compares the end-to-end reconfiguration downtime of \sys against both checkpoint baselines across all four model sizes and multiple parallelism settings.

For Megatron-LM Checkpoint, latency is dominated by two factors: reloading distributed states, and the cold start penalty of re-spawning processes, rebuilding NCCL topologies, and JIT compilation etc. For GPT-30B model, this translates to nearly 150\,s of forced idle time per scaling event. UCP reduces the checkpoint load latency but process launch, framework initialization and warmup still remain on the critical path, keeping downtime above 120\,s.

\sys bypasses both bottlenecks entirely. By streaming state over PCIe/InfiniBand and hiding initialization in the Shadow World, it achieves \textbf{14$\times$--23$\times$ speedup} over Megatron-LM Checkpoint, reducing downtime to the 2--6\,s range across all model sizes, background initialization shifts the startup cost off the critical path.


\paragraph{Impact of Storage Bandwidth.}
Checkpoint-based systems are fundamentally sensitive to storage bandwidth, while \sys is not. We evaluate Megatron-LM Checkpoint under checkpoint bandwidths ranging from 0.25 to 2.0\,Gb/s per GPU. As shown in Figure~\ref{fig:eval_main}(b), at 0.25\,Gb/s per GPU, checkpoint loading alone takes over 300\,s for GPT-14B. In contrast, \sys streams state over the InfiniBand interconnect, skipping storage I/O. Its reconfiguration time is therefore insensitive to storage bandwidth variations.

\subsection{Breakdown and Overhead}
\label{sec:eval-breakdown}

To quantify the components of \sys's downtime, we decompose the reconfiguration latency into two phases that lie on the critical path, as shown in Figure~\ref{fig:eval_main}(c). Shadow communicator bootstrap, framework initialization and intersection-based transfer planning etc. are intentionally excluded: they run in background concurrent with training and are fully overlapped before the reconfiguration event stall period.

The Switch phase ($<$0.5\,s) atomically redirects process-group references and routing metadata, while old-world resources are reclaimed asynchronously, then the Transfer And Combine phase ($\sim$2--4\,s) performs layer-by-layer state relocation through the staging buffer; with 14B parameters on A800 PCIe/IB, transferring the $\sim$28\,GB state requires around 2\,s, and this is the only phase that scales with model size.

\paragraph{Steady-State Interference.}
A natural concern is whether Shadow World initialization impacts the ongoing training loop. We profile 100 consecutive iterations with concurrent Shadow Group construction. The mean iteration step time varied by only 0.28\%  compared to the baseline without any background activity. The Companion Manager's thread pool will handle the creation of new process groups and data transfer. As shown in Figure~\ref{fig:eval_main}(d), no significant latency spikes or throughput drops are observed during background construction, confirming that Shadow communicators built on separate CUDA streams and CPU threads, along with mock-mode warmup that intercepts all collective calls locally, introduce negligible interference.

\subsection{Sensitivity to Volatility Regime}
\label{sec:eval-volatility}

The goodput advantage of \sys depends on how frequently scaling events occur. To quantify this, we repeat the 8-hour volatility experiment from \S\ref{sec:eval-goodput} under three regimes: low (60-min intervals, $\sim$8 events), medium (30-min intervals, $\sim$16 events), and high (10-min intervals, $\sim$48 events). Figure~\ref{fig:volatility} reports the aggregate training efficiency(GPU Utilization) for each regime across all three systems.

Under the low-volatility regime, all systems achieve 95--99\% efficiency because reconfiguration overhead is amortized over long stable periods; even Megatron-LM Checkpoint reaches 95.2\%. As volatility increases, the gap widens sharply: at middle volatility, Megatron-LM Checkpoint drops to 79.8\%, while UCP reaches 85.6\%; at high volatility, Megatron-LM Checkpoint drops to 58.2\%, while UCP reaches 61.3\%. \sys sustains 99.1\% even under the most aggressive regime, because each live reconfiguration costs seconds rather than minutes and simultaneously optimizes the parallelism configuration.

\begin{figure}[t]
    \centering
    \includegraphics[width=\linewidth]{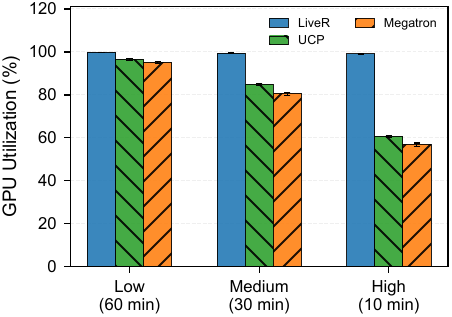}
    \caption{Training efficiency across three volatility regimes (GPT-14B, 32 GPUs, 8 hours). \sys's advantage grows with event frequency: from negligible at low volatility to a 19.3 percentage-point lead over Megatron-LM Checkpoint at high volatility.}
    \label{fig:volatility}
\end{figure}

\subsection{Training Efficiency under Volatility}
\label{sec:eval-goodput}

We benchmark \sys's end-to-end resilience by subjecting the GPT-14B model to a high-volatility environment on our 32-GPU testbed. We inject scale-out and scale-in events at intervals of 30--60 minutes, simulating spot instance availability patterns. Each event changes the active GPU count.

As plotted in Figure~\ref{fig:single_job}, summing the idle intervals over 47 distinct elasticity events accumulates to a loss of 80+ GPU-hours for the baseline. \sys slashes this wasted allocation down to a mere 4.1 GPU-hours. The Reconfiguration Downtime---the cumulative time spent paused during all 47 reconfigurations---is reduced from over 130 minutes (Megatron) and 100+ minutes (UCP) to just 7 minutes with \sys, a \textbf{14.2$\times$} improvement over the best baseline. 

This dramatic reduction stems from \sys's ability to maintain live training processes during elasticity events, avoiding the costly stop-and-restart cycle. While Megatron and UCP must tear down the entire training context, reinitialize NCCL, reload checkpoints from storage, and recompile CUDA kernels at every reconfiguration, \sys performs lightweight peer-to-peer state transfers in the background. Consequently, \sys achieves near-optimal goodput of \textbf{99.5\%}, compared to 93\% for UCP and 91\% for Megatron, effectively masking nearly all resilience overhead within the 24-hour training window.

\begin{figure}[t]
    \centering
    \includegraphics[width=\linewidth]{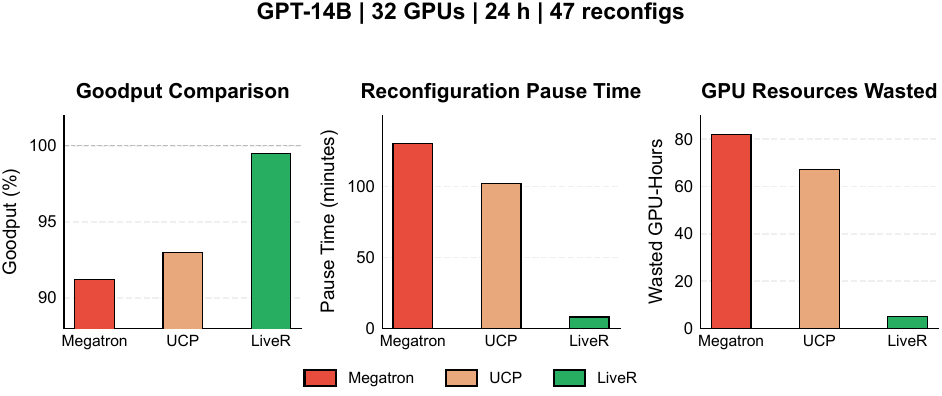}
    \caption{Cumulative wasted GPU-hours for a single training job reflecting directly on deployment costs.}
    \label{fig:single_job}
\end{figure}

\subsection{Correctness Verification}
\label{sec:eval-correctness}

Elasticity frameworks that manipulate distributed tensors risk numerical parity misalignment. We verify the bit-exactness of \sys's intersection-based transfer by triggering an arbitrary 3D reshape on the GPT-1.7B model at step 160: $(TP{=}2, PP{=}2) \to (TP{=}4, PP{=}1)$. After reconfiguration, we aggregate the global tensors and perform an element-wise differential trace against a statically mapped reference. The maximum deviation is exactly $\pm 0.0$.

\begin{figure}[t]
    \centering
    \includegraphics[ width=1.0\linewidth]{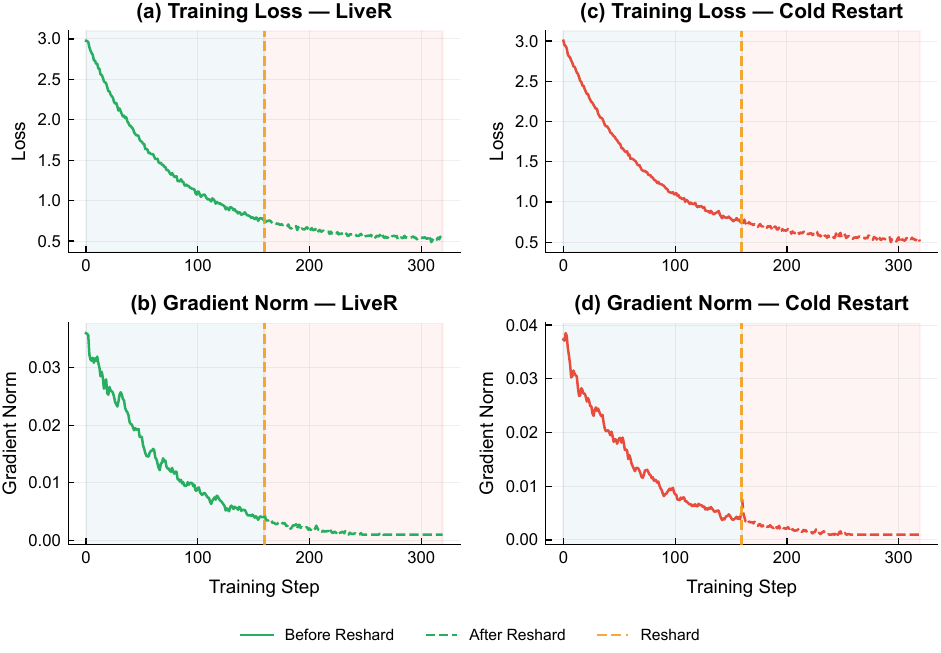}
    \caption{Training loss and gradient norm trace the static baseline identically before and after a live reshaping at step 160, confirming bit-exact numerical parity.}
    \label{fig:correctness}
\end{figure}

Figure~\ref{fig:correctness} shows that the loss trajectory and gradient norm are indistinguishable before and after the scaling event. The correctness guarantee follows from the completeness property of the intersection-based planning (Appendix~\ref{sec:appendix-transfer}): every element of the old configuration's tensor shards has a unique destination in the new configuration, and the union of all new shards is equivalent to the union of all old shards.

\subsection{Large-Scale Simulation Evaluation}
\label{sec:eval-simulation}
Our physical testbed is limited to 32 GPUs. To evaluate \sys at scales exceeding our hardware constraints (up to 1,024 GPUs and 70B parameters), we built a discrete-event simulator using the SimPy framework. This section describes the simulator design, calibration, and results.

\subsubsection{Simulator Design and Calibration}

The simulator faithfully models the critical paths of distributed LLM training: computation kernels (derived from roofline theoretical flops vs.\ empirical SM utilization), network topology, and persistent storage I/O latency distributions.

Crucially, the simulator is calibrated against our 32-GPU physical testbed. We directly profiled NCCL collective operations, TCP handshakes, and process startup times, feeding these constants into the simulator. Network transfer times are modeled based on profiled NCCL \texttt{all\_to\_all} and point-to-point bandwidth; computation steps are derived from Megatron-LM iteration latencies; and distributed initialization overheads (TCP/NCCL setup) are measured empirically across different cluster sizes using PyTorch 2.7.

\begin{figure}[t]
    \centering
    \includegraphics[width=\linewidth]{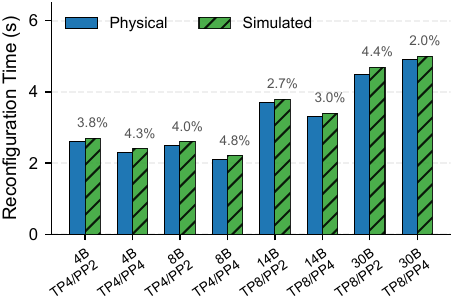}
    \caption{Simulator validation: physical vs.\ simulated reconfiguration latency across model sizes. The simulator achieves $<$5\% divergence from empirical measurements.}
    \label{fig:simulator_validation}
\end{figure}

Figure~\ref{fig:simulator_validation} validates the simulator by comparing physical measurements against simulated predictions for the same configurations on 32 GPUs. The divergence is less than 5\% across all model sizes, providing confidence in the large-scale predictions below.

\subsubsection{70B Model Reconfiguration}

\begin{figure}[t]
    \centering
    \includegraphics[width=\linewidth]{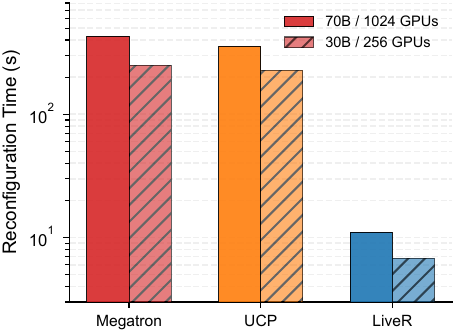}
    \caption{Simulated reconfiguration time for a 70B parameter model on 1,024 GPUs. \sys reduces downtime from $\sim$565\,s to $\sim$11\,s, a 50$\times$ improvement.}
    \label{fig:large_reshard}
\end{figure}

Figure~\ref{fig:large_reshard} extrapolates the reconfiguration comparison to a 70B parameter model on 1,024 GPUs. Cold Restart latency balloons linearly with network checkpoint volume and cluster initialization synchronization points, taking $\sim$565\,s. \sys reduces this to $\sim$11\,s---a \textbf{50$\times$ improvement}.

\subsection{Summary}

Across all physical experiments on the 32-GPU testbed, \sys achieves 14$\times$--23$\times$ speedup over Megatron-LM Checkpoint across model sizes and parallelism settings, with approximately 7\,s downtime and under 0.3\% steady-state overhead---confirming that background construction is non-intrusive.  And \sys sustains approximately 99\% training efficiency under frequent volatility versus 61.3\% for checkpoint-based baselines. Numerical parity is bit-exact across all tested reshaping scenarios. Large-scale results extrapolated to 1,024 GPUs via calibrated simulation further corroborate these findings.

%% file: conclusion.tex
\section{Discussion}
\label{sec:discussion}

While \sys demonstrates significant improvements over the Stop-and-Restart paradigm, several aspects regarding its operational boundaries and overheads warrant discussion.

\paragraph{Failure model and checkpoint fallback.}
\sys is designed explicitly for warning-based or planned elasticity (e.g., spot instance preemption notices, dynamic cluster resizing). It operates under the assumption that all GPUs remain reachable during the bounded transfer window (2--4\,s for models up to 30B). If an unexpected fail-stop loss occurs before the transfer is complete, \sys does not attempt to resolve the failure within the live path; instead, it falls back to the most recent durable checkpoint. As discussed in Section~\ref{sec:design-event-model}, this boundary is not a binary cliff but a graceful degradation: any partial Shadow World progress can accelerate the subsequent checkpoint-based recovery.

\paragraph{Network sensitivity and contention.}
\sys uses high-bandwidth interconnects (PCIe, InfiniBand) for state transfer. Consequently, its reconfiguration latency scales inversely with available bandwidth. Additionally, while the background initialization involves negligible overhead, the final streaming transfer bursts network traffic, and significantly reduces streaming efficiency if unexpected congestion occurs on the network. 

\paragraph{Concurrent reconfiguration events.}
The current design serializes reconfiguration events: if a new event arrives during the Switch/Transfer window, it is queued and processed after the current handoff completes. This design is motivated by the brevity of the critical window ($<$6\,s): the probability of a second event within this window is negligible under typical volatility patterns (events arrive on the order of tens of minutes). For the rarer case where the target topology becomes stale during preparation (e.g., a node in the new configuration is preempted before the Switch commits), the system can cancel the pending handoff and restart preparation with the updated topology, since the Active World's state remains intact until the Switch commits.

\paragraph{Preparation time vs.\ warning window.}
A practical concern is whether \sys's background preparation always fits within the preemption warning window. On our 32-GPU testbed, the total preparation time (Shadow World bootstrap + mock initialization + transfer planning) is under 60\,s, well within the typical 2-minute Spot notice. At 1,024 GPUs, our simulator projects this to grow to 90--150\,s, approaching but still within the window. For even larger scales, the scheduler can proactively trigger preparation before an official notice arrives---for example, when spot price trends indicate elevated preemption risk---thereby hiding the preparation latency entirely. The key property is that preparation is \emph{predictable} and can be overlapped with training, so it need not be triggered reactively.

\paragraph{Post-reconfiguration efficiency advantage.}
Existing online systems achieve fast reconfiguration but sacrifice flexibility in how parallelism may change. Bamboo~\cite{bamboo} and Oobleck~\cite{oobleck} restrict reconfiguration to adjustments within fixed pipeline templates---adding or removing replicas rather than reshaping tensor or pipeline parallelism degrees. Parcae~\cite{parcae} supports runtime DP+PP re-partitioning but excludes tensor parallelism; TrainMover~\cite{trainmover} preserves the original TP/PP/DP layout, performing only one-to-one node replacement. In contrast, \sys supports arbitrary mixed-parallel reshaping and scaling, and \sys reduces the downtime to second level in a scaling event.

\section{Conclusion}
\label{sec:conclusion}

We presented \sys, a live reconfiguration runtime that treats planned elasticity as a streaming network transfer rather than a storage operation.
Unlike traditional stop-and-restart systems, \sys replaces downtime with background initialization and layer-streamed transfer, executing arbitrary reshaping of multidimensional parallel configurations in seconds.
By focusing explicitly on environments with preemption warnings or planned scaling, \sys circumvents checkpointing bottlenecks and isolates initialization overheads. Evaluated on realistic large-scale workloads, \sys achieves 14$\times$--23$\times$ speedup compared to state-of-the-art checkpointing strategies, demonstrating its capability to maintain high training efficiency under volatile conditions.

%% file: appendix.tex
\section{Appendix}
\label{sec:appendix}

This appendix provides supplementary material: memory overhead analysis (\S\ref{sec:appendix-mem}), and intersection-based transfer formalization (\S\ref{sec:appendix-transfer}).

\subsection{Memory Overhead Analysis}
\label{sec:appendix-mem}

The Shadow World does not instantiate a second copy of model states. Table~\ref{tab:mem_overhead} provides a detailed per-rank breakdown of the additional memory consumed during a reconfiguration event.

\begin{table}[H]
\centering
\caption{Per-rank memory overhead during reconfiguration (GPT-14B, 32 GPUs, TP=4, PP=2, DP=4).}
\label{tab:mem_overhead}
\small
\begin{tabular}{lrr}
\toprule
\textbf{Component} & \textbf{Size} & \textbf{\% of 80\,GB} \\
\midrule
NCCL Communicator & $\sim$1\,GB& $\sim$1\%\\
Staging Buffer ($B$)          & 512\,MB      & 0.64\%    \\
Transfer Plan Metadata        & $\sim$1\,MB  & $<$0.01\% \\
\midrule
\textbf{Total}                & $\sim$1.5  GB& \textbf{$\sim$1.65\%} \\
\bottomrule
\end{tabular}
\end{table}

The total overhead is bounded by $O(B + C)$ where $B$ is the staging buffer size and $C$ is the communicator ---less than 1.5\,GB for a 14B model on 80\,GB A800 GPUs. For larger models, $B$ can be reduced, trading slightly longer transfer time for lower memory pressure. The staging buffer is reused across layers, so the overhead does not scale with model size.

\subsubsection{Formal Bounded-Memory Guarantee}

\begin{theorem}[Bounded Memory During Resharding]
During the execution of the Layer-Streaming Protocol (Algorithm~\ref{alg:streaming}), at any point in time, the peak additional memory consumption on any GPU rank is strictly bounded by $O(B + C)$, where $B$ is the staging buffer size and $C$ is the NCCL communicator metadata size.
\end{theorem}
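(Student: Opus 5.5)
We prove the statement by accounting for every allocation the Layer-Streaming Protocol (Algorithm~\ref{alg:streaming}) makes on a fixed rank $r$, and showing that each is either reused storage, a fixed-size buffer, or communicator state. We define the \emph{additional} memory $M_r(t)$ at time $t$ as the bytes allocated on $r$ beyond its steady-state footprint. The steady-state footprint is the parameter and optimizer storage owned by $r$ under the source configuration $C_k$, or under $C_{k+1}$ for a newly added rank whose shards were preallocated during mock warmup (Section~\ref{sec:design-isolated}). We then split $M_r(t)$ into three contributions: (i) communicator state for the Shadow World $G_{k+1}$ and the peer-to-peer channels used for transfer; (ii) the staging buffer; and (iii) transfer-plan metadata and per-task descriptors. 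Each term is bounded separately.

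\textbf{Term (i).} Communicators are created once, during the Prepare state, and are not re-created inside the per-layer loop. By the generation state machine (Section~\ref{sec:design-shadow-companion}), at most two generations coexist. The old generation's footprint is already part of the steady state, so the only new contribution is the $G_{k+1}$ groups plus the P2P channels. We take $C$ to denote this total, so the term is at most $C$ for the whole run.

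\textbf{Term (iii).} The plan is a list of \texttt{TransferTask} byte ranges computed on CPU. Its GPU-resident portion is at most a constant per task for the current layer, which we absorb into $C$, or alternatively bound by a constant, as Table~\ref{tab:mem_overhead} shows ($\sim$1\,MB).

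\textbf{Term (ii).} This is the core step, argued by induction over layers $\ell$. Invariant: at the start of layer $\ell$, the staging buffer is the single preallocated region of size $B$, and no other transient tensors exist. During layer $\ell$, sources \texttt{isend} directly from views of their existing shards, which are slices of owned storage and require no copy. Destinations \texttt{irecv} into sub-ranges of the staging buffer, assemble the new shard, and write it into parameter storage that already belongs to the steady-state footprint. Before layer $\ell{+}1$ is issued, the protocol waits on all of layer $\ell$'s requests. This step is what makes the buffer reusable and restores the invariant. Hence term (ii) never exceeds $B$, and summing the three terms gives $M_r(t)\le B + C + O(1) = O(B+C)$ for all $t$.

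\textbf{Main obstacle.} The difficulty is that a single layer's incoming slices for rank $r$ can exceed $B$; for example, a large embedding or MLP weight under a TP merge. Naively, the ``one layer at a time'' argument would then give $O(\max_\ell |S_\ell|)$ rather than $O(B)$. We would handle this by following the implementation (Section~\ref{sec:implementation}): transfer tasks are further partitioned into chunks of size at most $B$, so the loop is really over chunks, not layers, and the inductive invariant is applied per chunk. A second, subtler point is that a destination rank that is also a source could overwrite its own shard while it is still being sent. We would require that a chunk is written into parameter storage only after the corresponding sends from overlapping ranges have completed, or, if no such ordering exists, that the in-place region be routed through the staging buffer. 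Either way the extra space stays within $B$. With this chunk-level refinement stated, the bound is independent of the total model size, as claimed.
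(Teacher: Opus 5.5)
\textbf{Verdict: same approach as the paper, but with a baseline that makes step (ii) false for some reconfigurations.}

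Your accounting matches the paper's: the staging buffer is reused, sources send from storage they already own, the assembled shard lands in storage that is not charged, NCCL and communicator state go into $C$, and the per-layer wait/barrier prevents cross-layer accumulation. Your chunk-level refinement is a real improvement. Algorithm~\ref{alg:streaming} posts all of a layer's \textsc{IRecv}s into one buffer of size $B$, and the paper's proof silently assumes each layer's incoming data fits there. Apply the chunking on the send side as well: a slice along the partitioned inner dimension of a row-parallel weight is not contiguous, and an NCCL send needs a contiguous buffer, so ``no copy'' is not automatic.

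The real problem is your baseline. You measure additional memory against the \emph{source}-configuration footprint of a surviving rank. Step (ii) therefore has to assert that the new shard is written into storage that rank already owns. That fails whenever a survivor's target shard is larger than its source shard. For example, scale in from $(TP{=}4,PP{=}2)$ on 8 GPUs to $(TP{=}4,PP{=}1)$ on 4 GPUs. Each survivor must now also hold its TP slice of every layer of the vacated stage. That storage lies outside your baseline, so your $M_r(t)$ grows with model size, and the $O(B+C)$ bound is false under your own definition.

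Your aliasing fix has the same flaw even when per-rank sizes are equal. Take the in-place reshape $(TP{=}8,PP{=}1)\to(TP{=}4,PP{=}2)$ on the same 8 ranks. After layer $\ell\le L/2$ has been processed, a stage-0 rank holds:
\begin{itemize}
\item new $1/4$-slices of layers $1,\dots,\ell$;
\item its not-yet-processed old $1/8$-slices of layers $\ell{+}1,\dots,L$.
\end{itemize}
That totals $L/8+\ell/8$ layer-units, against a footprint of $L/8$ in both configurations. At $\ell=L/2$ the excess is $L/16$ layer-units whatever the storage layout. So ``routing through the staging buffer'' cannot keep the extra space within $B$ under the fixed layer order.

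The paper avoids both issues by convention. The shard is written into \emph{separately pre-allocated} storage of the new configuration, which is declared non-additional because it is required for training regardless, while the source storage is only read. If you adopt that baseline (memory beyond what $C_k$ and $C_{k+1}$ themselves need), your term-(ii) argument goes through verbatim and the aliasing discussion becomes unnecessary. Note that this convention lets old and new storage coexist uncounted on surviving ranks, a cost the paper does not discuss.
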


\begin{proof}
Let $L$ be the number of layers and $S_\ell$ the total size of parameters and optimizer states for layer $\ell$. The protocol processes layers sequentially: at each step $\ell$, the additional memory on a destination rank consists of:
\begin{enumerate}
    \item The staging buffer $B$, allocated once before the loop (Algorithm~\ref{alg:streaming}, line~13) and reused across all layers.
    \item The assembled shard for layer $\ell$, which is written directly into the pre-allocated parameter storage of the new configuration (not additional memory, since this storage is required for training regardless).
\end{enumerate}
On a source rank, no additional memory is allocated: the source reads slices from existing parameter storage and sends them via \texttt{ISend} (line~9--10), which uses NCCL's internal send buffers (bounded by NCCL's own configuration, included in $C$).

The key invariant is \emph{no cross-layer accumulation}: the staging buffer is reused (line~13--14 re-allocate the same buffer each iteration, and Python/PyTorch's reference semantics ensure the previous layer's buffer is freed before the new one is allocated). The barrier at line~14 ensures that no rank proceeds to layer $\ell+1$ until all transfers for layer $\ell$ have completed, preventing any overlapping buffer usage.

Therefore, at any point during execution, the additional memory on any rank is at most $B + C$, independent of $L$ and $\sum_\ell S_\ell$. Since $B$ is a configurable constant (default 512\,MB--1\,GB) and $C$ is a small fixed overhead per communicator (typically $\sim$1\,GB as shown in Table~\ref{tab:mem_overhead}), the bound is $O(B + C)$.
\end{proof}

\subsection{Intersection-Based Transfer Planning}
\label{sec:appendix-transfer}

Standard frameworks tightly couple state tensor shards to rank indices. \sys introduces an \emph{Abstract Resource View}, where the training state is defined by logical tensors (e.g., ``Weights of Layer $L$'') and a sharding specification, rather than physical memory addresses. This abstraction allows us to treat reconfiguration as a geometric transformation problem.

\para{Design Goals.}
We design the resharding engine with three goals. First, it must be \emph{topology agnostic}, handling any transition ($TP \leftrightarrow TP'$, $DP \leftrightarrow DP'$, $PP \leftrightarrow PP'$) using a single unified algorithm. Second, the transfer must enforce \emph{bounded memory}, never materializing full tensors since model size far exceeds single-GPU memory. Third, all data must flow \emph{peer-to-peer} over the high-speed interconnect with zero disk usage.

\subsubsection{Problem Formalization}

Let $\mathcal{T} = \{T_1, T_2, \ldots, T_L\}$ be the set of tensors comprising the training state across $L$ layers, including parameters $\theta$, optimizer states $(m, v)$, and gradient accumulators (if needed).
We define a \emph{configuration} as a tuple $C = (R, \mathcal{S})$, where $R \subseteq \mathbb{N}$ is the set of participating GPU ranks and $\mathcal{S}: \mathcal{T} \times R \to \mathcal{P}(\mathbb{N}^d)$ is a sharding function that maps each tensor-rank pair to a set of tensor indices.

\sys supports three distinct classes of configuration transitions.
\textbf{In-place Reconfiguration} involves changing the sharding function $\mathcal{S}$ while the set of GPUs $R$ remains constant ($|R_{old}| = |R_{new}|$). This occurs, for example, when transitioning from $(TP=4, PP=2)$ to $(TP=8, PP=1)$ on a fixed 32-GPU cluster.
\textbf{Scale-Out} occurs when $R_{old} \subset R_{new}$, requiring state to be redistributed from dense shards to sparse shards to utilize added capacity.
Conversely, \textbf{Scale-In} ($R_{new} \subset R_{old}$) requires consolidating state onto fewer devices, typically to handle preemption or resource rebalancing.

We denote the shard of tensor $T$ owned by rank $r$ under configuration $C$ as $\mathcal{S}(T, r, C)$.
The fundamental correctness requirement is that the union of all shards in the new configuration must be equivalent to that in the old configuration:
\begin{equation}
    \forall T \in \mathcal{T}: \bigcup_{r \in R_{old}} \mathcal{S}(T, r, C_{old}) \equiv \bigcup_{r' \in R_{new}} \mathcal{S}(T, r', C_{new})
\end{equation}
The system must compute and execute the data movement plan to achieve this equivalence without materializing full tensors on any single GPU.

\subsubsection{Intersection-Based Transfer Planning}

We propose a \emph{distributed, geometry-aware planning algorithm} that computes the exact data transfers required for resharding.
The key insight is that each pair of (source rank, destination rank) can independently determine whether they need to exchange any data by computing the geometric intersection of their respective tensor views.

\begin{definition}[View Function]
For a tensor $T$ with shape $(d_1, d_2, \ldots, d_n)$, the \emph{view function} $\mathcal{V}(T, C, r)$ returns the $n$-dimensional hyper-rectangular region of indices owned by rank $r$ under configuration $C$. This region is represented as $\prod_{i=1}^{n} [l_i, h_i)$ where $l_i$ and $h_i$ are the lower and upper bounds along dimension $i$.
\end{definition}

Given configurations $C_{old}$ and $C_{new}$, each pair of ranks $(r_{src} \in R_{old}, r_{dst} \in R_{new})$ computes the intersection region $\mathcal{R}_{transfer}(T, r_{src}, r_{dst}) = \mathcal{V}(T, C_{old}, r_{src}) \cap \mathcal{V}(T, C_{new}, r_{dst})$.
If $\mathcal{R}_{transfer}$ is non-empty, the system generates a \texttt{TransferTask} specifying the precise slice bounds.
This computation utilizes only sharding metadata, avoiding any access to actual tensor data.

\subsubsection{Parallelism-Specific Instantiation}

This intersection-based formalism naturally generalizes across all forms of parallelism.
In Tensor Parallelism (TP), layers are typically sharded along input or output dimensions. Increasing TP from 4 to 8, for instance, causes each source rank to conceptually split its columns; the intersection computation identifies the exact sub-range to send to a new partner.
For Data Parallelism (DP), increasing the replica count necessitates a broadcast-like pattern where existing ranks send their full shards to new partners. The intersection logic automatically handles this, degenerating to the full source view.
Pipeline Parallelism (PP) transitions involve moving entire layers between stages; here, the intersection is either the complete tensor or simple null set.
For Expert Parallelism (EP), the view function extends to the expert dimension, enabling fine-grained migration of expert parameters.

The planning phase operates entirely on CPU and requires only $O(|\mathcal{T}| \cdot \max(|R_{old}|, |R_{new}|))$ intersection computations after pruning optimization.
For a 175B-parameter model with 96 layers and 1,024 ranks, the entire plan is generated in under 1 second, which is negligible compared to the data transfer time.

\subsubsection{Layer-Streaming Protocol Details}

Algorithm~\ref{alg:streaming} shows the concrete execution protocol used after transfer planning.
The key property is bounded memory: each destination rank reuses the same fixed staging buffer across layers.

\begin{algorithm}[t]
\caption{Layer-Streaming Resharding}
\label{alg:streaming}
\begin{algorithmic}[1]
\Require Source config $C_{old}$, destination config $C_{new}$, tensors $\mathcal{T}$
\Require Staging buffer size $B$
\State $\mathcal{P} \gets \textsc{ComputeTransferPlan}(C_{old}, C_{new}, \mathcal{T})$
\For{each layer $\ell = 1$ to $L$}
    \State $\mathcal{P}_{\ell} \gets \{p \in \mathcal{P} : p.layer = \ell\}$
    \For{each rank $r$ in parallel}
        \If{$r \in R_{old}$}
            \For{each task $p \in \mathcal{P}_{\ell}$ where $p.src = r$}
                \State $slice \gets T_{\ell}[p.bounds]$
                \State \textsc{ISend}($slice$, $p.dst$)
            \EndFor
        \EndIf
        \If{$r \in R_{new}$}
            \State Allocate staging buffer $buf$ of size $B$
            \For{each task $p \in \mathcal{P}_{\ell}$ where $p.dst = r$}
                \State \textsc{IRecv}($buf$, $p.src$)
            \EndFor
            \State \textsc{WaitAll}()
            \State Assemble received chunks into $T'_{\ell}$
        \EndIf
    \EndFor
    \State \textsc{Barrier}()
\EndFor
\end{algorithmic}
\end{algorithm}